\newcommand{\Rmnum}[1]{\expandafter\@slowromancap\romannumeral #1@}
\newtheorem{thm}{Theorem}
\newtheorem{corollary}[thm]{Corollary}
\newtheorem{lemma}[thm]{Lemma}
\newtheorem{eg}{Example}
\newtheorem{defn}{Definition}
\newtheorem{rem}{Remark}
\newtheorem{rem-eg}[thm]{Remark and Example}
\newcommand{\mP}{\mathcal{P}}
\newcommand{\mF}{\mathcal{F}}
\newcommand{\Rank}{{\mathrm{Rank}}}
\newenvironment{proof}[1][Proof]{\noindent \textbf{#1.} }{\hfill$\Box$}
\begin{document}

\title{The Failure Probability at Sink Node of Random Linear
Network Coding}

\author{\IEEEauthorblockN{Xuan Guang}
\IEEEauthorblockA{Chern Institute of
Mathematics\\Nankai University, P. R. China\\
Email: xuanguang@mail.nankai.edu.cn}
\and
\IEEEauthorblockN{Fang-Wei Fu}
\IEEEauthorblockA{Chern Institute of
Mathematics and LPMC\\
Nankai University, P. R. China\\
Email: fwfu@nankai.edu.cn}}

\maketitle

\begin{abstract}
In practice, since many communication networks are huge in scale or
complicated in structure even dynamic, the predesigned network
codes based on the network topology is impossible
even if the topological structure is known. Therefore, random
linear network coding was proposed as an acceptable coding
technique. In this paper, we further study the performance of random
linear network coding by analyzing the failure probabilities at sink
node for different knowledge of network topology and get some tight and
asymptotically tight upper bounds of the failure probabilities. In
particular, the worst cases are indicated for these bounds. Furthermore, if
the more information about the network topology is utilized,
the better upper bounds are obtained. These bounds improve on the
known ones. Finally, we also discuss the lower bound of this failure
probability and show that it is also asymptotically tight.
\end{abstract}

\IEEEpeerreviewmaketitle

\section{Introduction}
Network coding was first introduced by Yeung and Zhang in \cite{Zhang-Yeung-1999} and then profoundly developed in Ahlswede \emph{et al.} in \cite{Ahlswede-Cai-Li-Yeung-2000}. In the latter paper
\cite{Ahlswede-Cai-Li-Yeung-2000}, the authors showed that if coding
is applied at the nodes instead of routing alone, the source node
can multicast the information to all sink nodes at the theoretically
maximum rate. Li \emph{et al.} \cite{Li-Yeung-Cai-2003} indicated
that linear network coding with finite alphabet size is sufficient
for multicast. In \cite{Koetter-Medard-algebraic}, Koetter and M$\acute{\textup{e}}$dard presented an algebraic
characterization of network coding. Although network coding allows the
higher information rate than classical routing, Jaggi \emph{et al.}
\cite{co-construction} still proposed a deterministic
polynomial-time algorithm to construct a linear network code. For a
detail and comprehensive discussion of network coding, refer to
\cite{Zhang-book}, \cite{Yeung-book}, \cite{Fragouli-book},
\cite{Fragouli-book-app}, and \cite{Ho-book}.

Random linear network coding was originally proposed and analyzed
in the papers Ho \emph{et al.} \cite{Ho-random-conference-paper} and
\cite{Ho-etc-random}, where the main results are upper bounds for
failure probabilities of the code. Balli, Yan, and Zhang
\cite{zhang-random} improved on these bounds and the tightness of the
new bounds was studied by analyzing the asymptotic behavior of the
failure probability as the field size goes to infinity. However, the
upper bounds of failure probabilities proposed either by Ho \emph{et al.} \cite{Ho-etc-random} or by Balli \emph{et al.}
\cite{zhang-random} are not tight. In this paper, we further study
the random linear network coding and improve on the bounds of the
failure probabilities for different cases. In particular, if the
more knowledge about the topology of the network is known, we can
get the better bounds. Further, we indicate that these bounds
are either tight or asymptotically tight.

\section{Linear Network Coding and Preliminaries}
A communication network is defined as a finite acyclic directed
graph $G=(V,E)$, where the vertex set $V$ stands for the set of the
nodes and the edge set $E$ represents the set of communication
channels of the network. The nodes set $V$ consists of three
disjoint subsets $S$, $T$, and $J$, where $S$ is the set of source
nodes, $T$ is the set of sink nodes, the other nodes in
$J=V-S-T$ are called internal nodes and thus the subset $J$ is called the
set of internal nodes. A direct edge $e=(i,j)\in E$
represents a channel leading from node $i$ to node $j$. Node $i$
is called the tail of the channel $e$, node $j$ is called the head of the
channel $e$, and they are written as $i=tail(e)$, $j=head(e)$,
respectively. Correspondingly, the channel $e$ is called an outgoing
channel of $i$ and an incoming channel of $j$. For each node $i$,
define 
\begin{align*}
Out(i)=\{e\in E:\ e \mbox{ is an outgoing channel of }i\},\\
In(i)=\{e \in E:\ e \mbox{ is an incoming channel of }i\}.
\end{align*} 
For each
channel $e\in E$, there exists a positive number $R_e$ called the
capacity of the channel $e$. We allow the multiple channels between
two nodes and then assume reasonably that all capacity of the
channel is unit 1. That is, one field symbol can be transmitted over
a channel in one unit time. The source nodes generate messages and
transmit them to all sink nodes over the network by network coding.

In this paper, we sequentially consider single source multicast
networks, i.e. $|S|=1$, and the unique source node is denoted by $s$.
The source node $s$ has no incoming channels and any sink node
has no outgoing channels, but we use the concept of the imaginary
incoming channels of the source node $s$ and assume that these
imaginary channels provide the source messages to $s$. Let the
information rate be $w$ symbols per unit time which means that the source node
$s$ has $w$ imaginary incoming channels $d_1,d_2,\cdots,d_w$ and let $In(s)=\{d_1,d_2,\cdots,d_w\}$. The
source messages are $w$ symbols
$\underline{\bf{X}}=(X_1,X_2,\cdots,X_w)$ arranged in a row vector
where each $X_i$ is an element of the finite base field
$\mathcal{F}$. Assume that they are transmitted to $s$ through the $w$ imaginary channels. Using network
coding, these messages are multicast to each sink node and decoded
at each sink node.

We use $U_e$ to denote the message transmitted over channel $e=(i,j)$ and
$U_e$ is calculated by the following formula
$$U_e=\sum_{d\in In(i)}k_{d,e}U_d\ ,$$
where at the source node $s$, assume that the message transmitted
over $i$th imaginary channel $d_i$ is the $i$th source message, i.e.
$U_{d_i}=X_i$. And, by the definition of the global kernels of
the channel $e$, we have $U_e=\underline{\bf{X}}\cdot f_e$.

The linear network coding discussed above was designed based on
the global topology of the network. However, in most
communication networks, we cannot utilize the global topology
because the network is huge in scale, or complicated in
structure, even dynamic, or some another reasons. In other words, it is
impossible to use the predesigned codes based on the global
topology. Thus random linear network coding
was proposed as an acceptable coding technique. The main idea of random network
coding is that when a node (may be the source node $s$) receives the messages from its all
incoming channels, for each outgoing channel, it randomly and
uniformly picks the encoding coefficients from the base field
$\mathcal{F}$, uses them to encode the messages, and transmits the
encoded messages over the outgoing channel. In other words, the
local coding coefficients $k_{d,e}$ are independently and uniformly
distributed random variables in the base field $\mF$. Since random
linear network coding does not consider the network global topology
or does not coordinate codings at different nodes, it may not
achieve the best possible performance of network coding, that is,
some sink nodes may not decode correctly. Therefore, the performance
analysis of random linear network coding is important in theory
and application.

Before further discussion, we introduce some notation and
definitions as follows.

Let $A$ be a set of vectors from a linear space. $\langle A \rangle$
represents a linear subspace spanned by the vectors in $A$. In
addition, we give the definition of the failure probability at sink
node which was introduced exactly in \cite{zhang-random}.
\begin{defn}
Let $G$ be a single source multicast network, and the information rate
be $w$ symbols per unit time. $P_{e_t}\triangleq Pr(\Rank(F_t)<w)$
is called the failure probability of the random linear network
coding at sink node $t$, that is the probability that the source
messages cannot be decoded correctly at sink node $t\in T$.
\end{defn}

\section{Failure Probabilities of Random Linear Network Coding at Sink Node}
We have known that the performance analysis of random linear network
coding is very important in theory and application. In particular,
the random linear network coding is an acceptable coding technique
for non-coherent networks. However, many coherent networks
are huge and complicated, and thus the random linear network coding are often
used for the coherent networks. In this section, we study the
failure probability $P_{e_t}$ from coherent to non-coherent
networks. At first, we give the following lemma.

\begin{lemma}{\label{lem_bound}}
Let $\mathcal{L}$ be a $n$-dimensional linear space over finite
field $\mathcal{F}$, $\mathcal{L}_0,\ \mathcal{L}_1$ be two
subspaces of $\mathcal{L}$ of dimensions $k_0,\ k_1$,
respectively, and
$\langle\mathcal{L}_0\cup\mathcal{L}_1\rangle=\mathcal{L}$. Let
$l_1,\ l_2,\ \cdots,\ l_m$ $(m=n-k_0)$ be $m$ independently
uniformly distributed random vectors taking values in
$\mathcal{L}_1$. Then
$$Pr(\dim(\langle \mathcal{L}_0 \cup \{l_1,\
l_2,\ \cdots,\ l_m\}\rangle)=n)=\prod_{i=1}^{n-k_0}\left(
1-\frac{1}{\mathcal{|F|}^i}\right). $$
Therefore,
$$\frac{1}{\mathcal{|F|}}\leq Pr(\dim(\langle \mathcal{L}_0 \cup \{l_1,\
l_2,\ \cdots,\ l_m\}\rangle)<n)<\frac{1}{\mathcal{|F|}-1}.$$
\end{lemma}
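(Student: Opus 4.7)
The plan is to reveal the vectors $l_1, l_2, \ldots, l_m$ one at a time and track the growing subspace $\mathcal{V}_j = \langle \mathcal{L}_0 \cup \{l_1,\ldots,l_j\}\rangle$. Since $\dim(\mathcal{V}_0) = k_0$ and we are adding exactly $m = n-k_0$ vectors, the event $\dim(\mathcal{V}_m) = n$ coincides with the event that at each step the new vector $l_j$ strictly enlarges the previous span. So I would first write the target probability as a product of conditional probabilities
\[
\prod_{j=1}^{m} \Pr\bigl(l_j \notin \mathcal{V}_{j-1} \,\big|\, \dim(\mathcal{V}_{j-1})=k_0+j-1\bigr).
\]

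The main computational step is to evaluate each conditional probability, which reduces to computing $|\mathcal{V}_{j-1} \cap \mathcal{L}_1|$, since $l_j$ is uniform on $\mathcal{L}_1$. Here is where the hypothesis $\langle \mathcal{L}_0 \cup \mathcal{L}_1\rangle = \mathcal{L}$ enters crucially: because $\mathcal{V}_{j-1} \supseteq \mathcal{L}_0$, we have $\mathcal{V}_{j-1} + \mathcal{L}_1 = \mathcal{L}$, so the dimension formula gives
\[
\dim(\mathcal{V}_{j-1} \cap \mathcal{L}_1) = \dim(\mathcal{V}_{j-1}) + k_1 - n = k_1 - (n-k_0-j+1).
\]
Hence, conditional on the previous steps succeeding, $\Pr(l_j \in \mathcal{V}_{j-1}) = |\mathcal{F}|^{-(n-k_0-j+1)}$. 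Re-indexing $i = n-k_0-j+1$ collects the product into $\prod_{i=1}^{n-k_0}\bigl(1 - |\mathcal{F}|^{-i}\bigr)$, as required. I expect the only mildly delicate point to be justifying the dimension-formula step rigorously; everything else is bookkeeping.

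For the two-sided estimate of the failure probability $P := 1 - \prod_{i=1}^{n-k_0}(1 - |\mathcal{F}|^{-i})$, I would argue the bounds separately. The lower bound follows immediately from retaining only the first factor: $\prod_{i \geq 1}(1-|\mathcal{F}|^{-i}) \leq 1 - |\mathcal{F}|^{-1}$, hence $P \geq 1/|\mathcal{F}|$. For the upper bound I would use the standard inequality $1 - \prod(1-x_i) \leq \sum x_i$ for $x_i \in [0,1]$ (provable by induction), giving
\[
P \leq \sum_{i=1}^{n-k_0} \frac{1}{|\mathcal{F}|^{i}} < \sum_{i=1}^{\infty} \frac{1}{|\mathcal{F}|^{i}} = \frac{1}{|\mathcal{F}|-1}.
\]
No step looks like a real obstacle; the one place to be careful is the initial decomposition into conditional probabilities, ensuring the conditioning event $\{\dim(\mathcal{V}_{j-1})=k_0+j-1\}$ is consistently propagated so that the dimension formula applies at every stage.
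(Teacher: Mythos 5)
Your proof is correct. The paper actually states Lemma \ref{lem_bound} without any proof, so there is nothing to compare line by line; but your argument --- revealing the $l_j$ sequentially, using $\mathcal{V}_{j-1}+\mathcal{L}_1=\mathcal{L}$ with the dimension formula to get $Pr(l_j\in\mathcal{V}_{j-1})=|\mathcal{F}|^{-(n-k_0-j+1)}$, and then bounding $1-\prod_{i}(1-|\mathcal{F}|^{-i})$ from below by the first factor and from above by the geometric series --- is exactly the standard derivation the authors evidently intend, and it is consistent with their remark that the answer does not depend on $k_1$ (which cancels in your computation) and with how the lemma is invoked in Theorem \ref{thm_cohe_sink_original}.
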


\begin{rem}
We can observe that under the condition of Lemma
\ref{lem_bound}, $Pr(\dim(\langle \mathcal{L}_0 \cup
\{l_1,l_2,\cdots,l_m\}\rangle)=n)$ is not related to the dimension
of $\mathcal{L}_1$.
\end{rem}

Let $G$ be a single source multicast network, where the single
source node is denoted by $s$, the set of the sink nodes is denoted
by $T$, and the minimum cut capacity between $s$ and $t\in T$ is
$C_t$. The information rate is $w\leq\min_{t\in T}C_t$ symbols per
unit time.

For each sink node $t\in T$, since $w\leq C_t$ and Menger's
Theorem, there exist $w$ channel-disjoint paths from $s$ to $t$.
Let the arbitrarily chosen $w$ channel-disjoint paths from $s$ to
$t$ be $\mP_t=\{P_{t,1},P_{t,2},\dots,P_{t,w}\}$ and let
$P_{t,i}=\{e_{i,1},e_{i,2},\cdots,e_{i,m_i}\}$ satisfying
$tail(e_{i,1})=s,\ head(e_{i,m_i})=t$, and
$head(e_{i,j-1})=tail(e_{i,j})$ for others. The set of all channels
in $\mathcal{P}_t$ is denoted by $E_{\mathcal{P}_t}$. Furthermore,
assume that the number of the nodes in $\mathcal{P}_t$ is $r+2$,
where one is the source node $s$, one is the sink node $t$, and
another $r$ are internal nodes, which are denoted
by $i_1,i_2,\cdots,i_r$. There is a topological order ancestrally,
and without loss of generality, let the order be
$$s\triangleq i_0\prec i_1 \prec i_2 \prec \dots \prec i_r \prec i_{r+1}\triangleq t\ .$$

During our discussion, we use the concept of cuts of the paths from
$s$ to $t$ proposed in \cite{zhang-random}, which is different from
the concept of cuts of the network in graph theory. The first cut is
$CUT_{t,0}=In(s)$, i.e. the set of the $w$ imaginary channels.
Through the node $i_0=s$, the next cut $CUT_{t,1}$ is the set of the
first channels of all $w$ paths, i.e. $CUT_{t,1}=\{e_{i,1}:\ 1\leq i
\leq w\}$. Through the node $i_1$, the next cut $CUT_{t,2}$ is
formed from $CUT_{t,1}$ by replacing those channels in $In(i_1)\cap
CUT_{t,1}$ by their respective next channels in the paths. These new
channels are in $Out(i_1)\cap E_{\mathcal{P}_t}$. Other channels
remain the same as in $CUT_{t,1}$. Subsequently, once $CUT_{t,k}$ is
defined, $CUT_{t,k+1}$ is formed from $CUT_{t,k}$ by the same method
as above. By induction, all cuts $CUT_{t,k}$ for $k=0,1,\cdots,r+1$
can be defined. Furthermore, for each $CUT_{t,k}$, we divide
$CUT_{t,k}$ into two disjoint parts $CUT_{t,k}^{out}$ and
$CUT_{t,k}^{in}$, where
\begin{align*}
CUT_{t,k}^{out}=\{e:\ e\in CUT_{t,k}\setminus In(i_{k})\},\\
CUT_{t,k}^{in}=\{e:\ e\in CUT_{t,k}\cap In(i_{k})\}.
\end{align*}

\begin{thm}\label{thm_cohe_sink_original}
For this network $G$ mentioned as above, the failure probability of
random linear network coding at sink node $t\in T$ satisfies
$$P_{e_t}\leq 1-\prod_{k=0}^{r}\prod_{i=1}^{w-|CUT_{t,k}^{out}|}\left(1-\frac{1}{\mathcal{|F|}^i}\right)\ .$$
\end{thm}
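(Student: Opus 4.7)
The plan is to induct along the cut sequence $CUT_{t,0},\ CUT_{t,1},\ \dots,\ CUT_{t,r+1}$ and to bound, at each step, the conditional probability that the global kernels on the current cut continue to span the full coordinate space $\mF^w$. Write $V_k=\la\{f_e:\ e\in CUT_{t,k}\}\ra$ and let $E_k$ be the event $\{V_k=\mF^w\}$. Since $CUT_{t,r+1}$ consists of $w$ channels in $In(t)\cap E_{\mP_t}$, the occurrence of $E_{r+1}$ forces $\Rank(F_t)\ge w$, and hence $\Rank(F_t)=w$. So it suffices to prove
$$Pr(E_{r+1})\ \ge\ \prod_{k=0}^{r}\prod_{i=1}^{w-|CUT_{t,k}^{out}|}\left(1-\frac{1}{|\mF|^i}\right),$$
from which $P_{e_t}\le 1-Pr(E_{r+1})$ gives the theorem. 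The base case $E_0$ holds with probability $1$ because the global kernels on the $w$ imaginary channels in $In(s)$ form the standard basis of $\mF^w$.

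For the inductive step from $E_k$ to $E_{k+1}$, I would condition on the full history of local coefficients chosen at $i_0,\dots,i_{k-1}$ and restrict to any realisation for which $E_k$ holds. Under $E_k$ the $w$ vectors $\{f_e:\ e\in CUT_{t,k}\}$ are linearly independent (they span $\mF^w$ and there are exactly $w$ of them), so $\mL_0 := \la\{f_e:\ e\in CUT_{t,k}^{out}\}\ra$ has dimension $k_0=|CUT_{t,k}^{out}|$. Set $\mL_1:=\la\{f_d:\ d\in In(i_k)\}\ra$. Because $CUT_{t,k}^{in}\subseteq In(i_k)$ and $CUT_{t,k}^{out}\cup CUT_{t,k}^{in}=CUT_{t,k}$ already spans $\mF^w$, we have $\la\mL_0\cup\mL_1\ra=\mF^w$, so the hypotheses of Lemma \ref{lem_bound} are met with ambient dimension $n=w$. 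The $w-k_0$ new cut channels lie in $Out(i_k)\cap E_{\mP_t}$ and their global kernels have the form $f_e=\sum_{d\in In(i_k)}k_{d,e}f_d$, with the $k_{d,e}$ independent and uniform on $\mF$.

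The step I expect to be the main obstacle is verifying that each such $f_e$ is uniform on $\mL_1$ and that the $f_e$ are independent across $e$, because $|In(i_k)|$ may exceed $\dim\mL_1$ and so the coefficient-to-kernel map is not injective. Uniformity follows from the observation that the linear map $\mF^{|In(i_k)|}\to\mL_1$ given by $(k_{d,e})_d\mapsto f_e$ is surjective, so every point of $\mL_1$ has the same number of preimages and a uniform input yields a uniform output; independence across $e$ is inherited from the independence of the coefficient families across distinct outgoing channels. With this in hand, Lemma \ref{lem_bound} gives
$$Pr(E_{k+1}\mid\text{history with }E_k)\ \ge\ \prod_{i=1}^{w-|CUT_{t,k}^{out}|}\left(1-\frac{1}{|\mF|^i}\right).$$

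Chaining these conditional bounds from $k=0$ through $k=r$ and taking complements yields the theorem. The only numerical check buried in the bookkeeping is that the number of fresh random vectors introduced at step $k$ is $|CUT_{t,k}^{in}|=|CUT_{t,k}|-|CUT_{t,k}^{out}|=w-|CUT_{t,k}^{out}|$, which matches exactly the value $n-k_0$ required to apply Lemma \ref{lem_bound}; everything else is routine once the uniformity-and-independence verification above has been carried out.
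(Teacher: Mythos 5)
Your proposal follows essentially the same route as the paper: reduce to the $w\times w$ submatrix indexed by $CUT_{t,r+1}$, chain the spanning events over the cut sequence, and apply Lemma \ref{lem_bound} at each node with $\mathcal{L}_0$ spanned by the kernels on $CUT_{t,k}^{out}$ and $\mathcal{L}_1$ spanned by the kernels on $In(i_k)$. Your explicit verification that the freshly generated global kernels are uniform on $\mathcal{L}_1$ and mutually independent (via surjectivity of the coefficient-to-kernel map), and your conditioning on the full coding history rather than only on $\Gamma_{t,k}$, are details the paper leaves implicit, but the argument is the same.
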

\begin{proof}
For sink node $t\in T$, the decoding matrix $F_t=\begin{pmatrix}f_e:e\in In(t)\end{pmatrix}$
is a $w\times |In(t)|$ matrix over the field $\mF$. Define a
$w\times w$ matrix
$F_t'=\begin{pmatrix}f_{e_{1,m_1}},f_{e_{2,m_2}},\cdots,f_{e_{w,m_w}}\end{pmatrix}$. It is not
hard to see that $F_t'$ is a submatrix of $F_t$. It follows that the
event `` $\Rank(F_t)<w$ '' $\subseteq$ the event `` $\Rank(F_t')<w$
''. This means that
$$Pr(\Rank(F_t)<w)\leq Pr(\Rank(F_t')<w)\ .$$

Further define $w\times w$ matrices $F_t^{(k)}=(f_e: e\in
CUT_{t,k})$ for $k=0,1,\cdots, r+1$. If $\Rank(F_t^{(k)})<w$, we
call that we have a failure at $CUT_{t,k}$. we use $\Gamma_{t,k}$ to denote the event ``$\Rank(F_t^{(k)})=w$''. Obviously,
$F_t^{(r+1)}=F_t'$ because
$CUT_{t,r+1}=\{e_{1,m_1},e_{2,m_2},\cdots,e_{w,m_w}\}$. This implies
\begin{align*}
&Pr(\Rank(F_t')<w)=Pr(\Rank(F_t^{(r+1)})<w)\\
=&Pr((\Gamma_{t,r+1})^c)=1-Pr(\Gamma_{t,r+1}).
\end{align*}
In addition, since encoding at any node is independent of what
happened before this node as long as no failure has occurred up to
this node, we have
\begin{align}
&Pr(\Gamma_{t,r+1})\geq Pr(\Gamma_{t,r+1}\Gamma_{t,r}\cdots
\Gamma_{t,1}\Gamma_{t,0})\nonumber\\
=&Pr(\Gamma_{t,r+1}|\Gamma_{t,r})Pr(\Gamma_{t,r}|\Gamma_{t,r-1})\cdots
Pr(\Gamma_{t,1}|\Gamma_{t,0})Pr(\Gamma_{t,0})\nonumber\\
=&Pr(\Gamma_{t,r+1}|\Gamma_{t,r})Pr(\Gamma_{t,r}|\Gamma_{t,r-1})\cdots
Pr(\Gamma_{t,1}|\Gamma_{t,0})\label{pr=1}
\end{align}
where (\ref{pr=1}) follows because
$Pr(\Gamma_{t,0})=Pr(\Rank((f_e:e\in In(s)))=w)=Pr(\Rank(I_{w\times
w})=w)\equiv1$ with $I_{w\times w}$ being $w\times w$ identity matrix.

Therefore, applying Lemma \ref{lem_bound} for each $k$\ ($0\leq k \leq
r)$, we have
\begin{equation}
Pr(\Gamma_{t,k+1}|\Gamma_{t,k})=
\prod_{i=1}^{w-|CUT_{t,k}^{out}|}\left(1-\frac{1}{\mathcal{|F|}^i}\right)\label{thm1.1}\\
\end{equation}
where under the condition $\Gamma_{t,k}$, there must be
$|CUT_{t,k}^{out}|=\dim(\langle \{f_e: e\in
CUT_{t,k}^{out}\}\rangle)=\Rank((f_e: e \in CUT_{t,k}^{out}))$.

Combining (\ref{pr=1}) and (\ref{thm1.1}), it follows that
$$Pr(\Gamma_{t,r+1})\geq \prod_{k=0}^{r}\prod_{i=1}^{w-|CUT_{t,k}^{out}|}\left(1-\frac{1}{\mathcal{|F|}^i}\right).$$
That is, we get the upper bound of the failure probability at the
sink node $t$,
$$P_{e_{t}}\leq1-\prod_{k=0}^{r}\prod_{i=1}^{w-|CUT_{t,k}^{out}|}\left(1-\frac{1}{\mathcal{|F|}^i}\right)\ .$$
The proof is completed.
\end{proof}

\begin{rem}
This upper bound of the failure probability at the sink node $t$ in
Theorem $\mathrm{\ref{thm_cohe_sink_original}}$ is tight.
\end{rem}

\begin{eg}
For the well-known butterfly network, by Theorem
\ref{thm_cohe_sink_original} we know
$$P_{e_{t}}\leq 1-\prod_{i=1}^{2}(1-\frac{1}{|\mF|^{i}})(1-\frac{1}{|\mF|})^4
=1-\frac{(|\mF|+1)(|\mF|-1)^6}{|\mF|^7}\ .$$ On the other hand,
Guang and Fu \cite{Guang-Fu-ButterflyNet} have shown that for the
butterfly network $P_{e_{t}}=1-(|\mF|+1)(|\mF|-1)^6/|\mF|^7\ .$ This means that this upper bound is tight for the butterfly network.
\end{eg}

However, this upper bound is too complicated in practice. Thus, we
have to give a simpler in form but looser upper bound.
\begin{thm}\label{thm_cohe_sink}
For this network $G$, the failure probability of the random linear
network coding at sink node $t\in T$ satisfies
$$P_{e_t}\leq 1-\left[\prod_{i=1}^{w}\left(1-\frac{1}{|\mathcal{F}|^i}\right)\right]^{r+1} .$$
\end{thm}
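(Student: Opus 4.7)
The plan is to derive Theorem \ref{thm_cohe_sink} as a direct weakening of the sharper bound already established in Theorem \ref{thm_cohe_sink_original}, so essentially no new probabilistic argument is needed. I would start by quoting
$$P_{e_t}\leq 1-\prod_{k=0}^{r}\prod_{i=1}^{w-|CUT_{t,k}^{out}|}\left(1-\frac{1}{|\mathcal{F}|^i}\right)$$
from Theorem \ref{thm_cohe_sink_original}, and then bound the right-hand side from above by replacing the inner product by one whose index range does not depend on $k$.

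The key observation is that $|CUT_{t,k}^{out}|\geq 0$ for every $k$, so $w-|CUT_{t,k}^{out}|\leq w$. Since every factor $1-\tfrac{1}{|\mathcal{F}|^i}$ lies in $(0,1)$, extending the index range of the product only decreases its value; that is,
$$\prod_{i=1}^{w-|CUT_{t,k}^{out}|}\left(1-\frac{1}{|\mathcal{F}|^i}\right)\geq \prod_{i=1}^{w}\left(1-\frac{1}{|\mathcal{F}|^i}\right)$$
for each $k=0,1,\ldots,r$. Taking the product over all $k$ from $0$ to $r$ then gives
$$\prod_{k=0}^{r}\prod_{i=1}^{w-|CUT_{t,k}^{out}|}\left(1-\frac{1}{|\mathcal{F}|^i}\right)\geq \left[\prod_{i=1}^{w}\left(1-\frac{1}{|\mathcal{F}|^i}\right)\right]^{r+1}.$$
Substituting this into the bound from Theorem \ref{thm_cohe_sink_original} and flipping the inequality through $x\mapsto 1-x$ yields the claimed bound.

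The main thing to be careful about, which is not really an obstacle but worth flagging in the write-up, is the direction of the inequality: fewer factors of numbers in $(0,1)$ yields a \emph{larger} product, and this is precisely what makes the weaker bound a valid upper bound on $P_{e_t}$. Everything else, including the role of $r$ as the number of non-source, non-sink nodes on the chosen channel-disjoint paths $\mathcal{P}_t$ and the structure of the cuts $CUT_{t,k}$, is inherited unchanged from the setup and proof of Theorem \ref{thm_cohe_sink_original}.
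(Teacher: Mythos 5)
Your proposal is correct and matches the paper's (implicit) intent: the paper states Theorem \ref{thm_cohe_sink} without proof, introducing it only as ``a simpler in form but looser upper bound,'' i.e.\ precisely as a weakening of Theorem \ref{thm_cohe_sink_original} obtained by replacing each inner product over $i=1,\dots,w-|CUT_{t,k}^{out}|$ with the smaller product over $i=1,\dots,w$. Your inequality directions are all handled correctly, so nothing is missing.
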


In particular, if we choose the $w$ channel-disjoint paths with the
minimum number of the internal nodes among the collection of all $w$
channel-disjoint paths from $s$ to $t$ over network $G$, and denote
this minimum number by $R_t$, then we get a smaller upper bound
with the same simple form.
\begin{corollary}\label{cor_co_sink}
For this network $G$, the failure probability of the random linear
network coding at sink node $t\in T$ satisfies
$$P_{e_{t}}\leq1-\left[\prod_{i=1}^{w}\left(1-\frac{1}{|\mF|^i} \right)\right]^{R_t+1}.$$
\end{corollary}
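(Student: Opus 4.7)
The plan is to apply Theorem~\ref{thm_cohe_sink} to a carefully chosen family of channel-disjoint paths. The key observation is that the quantity $r$ appearing in the exponent of that theorem is not an intrinsic parameter of the network $G$: it is the number of internal nodes contained in the particular path collection $\mathcal{P}_t = \{P_{t,1}, \ldots, P_{t,w}\}$ that was fixed at the beginning of Section~III. The construction of the cuts $CUT_{t,k}$ and the cascade argument yielding $Pr(\Gamma_{t,r+1}) \ge \prod_{k=0}^{r} Pr(\Gamma_{t,k+1} \mid \Gamma_{t,k})$ go through unchanged for every admissible selection of $w$ channel-disjoint paths from $s$ to $t$. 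Consequently, Theorem~\ref{thm_cohe_sink} provides a valid upper bound for every such choice, and we are free to pick the one that minimizes $r$.

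Next I would note that, for $|\mathcal{F}| \ge 2$, the base $\prod_{i=1}^{w}\bigl(1 - 1/|\mathcal{F}|^i\bigr)$ lies strictly between $0$ and $1$, so the expression
$$1 - \left[\prod_{i=1}^{w}\left(1 - \frac{1}{|\mathcal{F}|^i}\right)\right]^{r+1}$$
is strictly increasing in the exponent $r+1$. Hence the tightest instance of Theorem~\ref{thm_cohe_sink} is obtained by selecting the path collection $\mathcal{P}_t$ that realizes the smallest possible value of $r$. By the very definition of $R_t$ in the corollary's hypothesis, this minimum is exactly $R_t$. Substituting $r = R_t$ into the conclusion of Theorem~\ref{thm_cohe_sink} yields the claimed inequality.

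There is essentially no obstacle in this argument; the corollary is a direct specialization of Theorem~\ref{thm_cohe_sink} via an optimization over the path choice. The only thing to double-check is the path-choice agnosticism of the theorem: one must verify that nothing in the definitions of $CUT_{t,k}^{out}$ and $CUT_{t,k}^{in}$, nor in the independence used for the conditional probabilities $Pr(\Gamma_{t,k+1}\mid\Gamma_{t,k})$, relies on a canonical $\mathcal{P}_t$. This is immediate from the construction in Section~III, so the corollary follows without further work.
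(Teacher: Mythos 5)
Your proposal is correct and matches the paper's (implicit) argument: the paper treats Corollary~\ref{cor_co_sink} as an immediate specialization of Theorem~\ref{thm_cohe_sink} to the particular family of $w$ channel-disjoint paths having the fewest internal nodes, so that $r=R_t$. Your additional monotonicity observation is accurate but not strictly needed, since the bound for $r=R_t$ already follows from applying the theorem to that one path collection.
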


\begin{rem}
Both upper bounds of the failure probability at the sink node in
Theorem $\mathrm{\ref{thm_cohe_sink}}$ and Corollary
$\mathrm{\ref{cor_co_sink}}$ are tight, and we can show the
tightness by the same way. Therefore, we only construct a network to
show the tightness of the upper bound in Theorem
$\mathrm{\ref{thm_cohe_sink}}$. In other words, we will give a
network as the worst case.
\end{rem}

\begin{eg}\rm\
\begin{figure}[!htb]
\begin{center}
\begin{tikzpicture}[->,>=stealth',shorten >=1pt,auto,node distance=2cm,
                    thick]
  \tikzstyle{every state}=[fill=none,draw=black,text=black,minimum size=6mm]
  \tikzstyle{place}=[fill=none,draw=white,minimum size=0.1mm]
  \node[state]         (s)                 {$s$};
  \node[state]         (i_1)[right of=s]   {$i_1$};
  \node[state]         (i_2)[right of=i_1]   {$i_2$};
  \node[state]         (i_r)[right of=i_2]   {$i_r$};
  \node[state]         (t)[right of=i_r]     {$t$};

  \path (s) edge [bend left=45]          node {$w$ channels} (i_1)
            edge                         node {$\vdots$} (i_1)
            edge [bend right=45]         node {} (i_1)
      (i_1) edge [bend left=45]          node {$w$ channels} (i_2)
            edge                         node {$\vdots$} (i_2)
            edge [bend right=45]         node {} (i_2)
      (i_2) edge [-,dashed]                node {} (i_r)
      (i_r) edge [bend left=45]          node {$w$ channels} (t)
            edge                         node {$\vdots$} (t)
            edge [bend right=45]         node {} (t);
\end{tikzpicture}
\caption{Plait Network with $r$ internal nodes}
\label{fig_plait_net}
\end{center}
\end{figure}
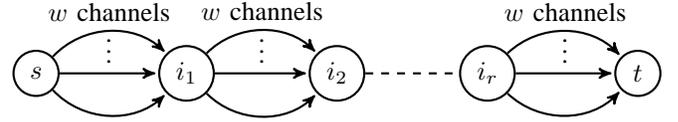
For the given information rate $w$, the network $G_1$ shown by
Fig.\ref{fig_plait_net} can be constructed as follows. Let the
source node be $s$, the sink node be $t$, the number of the internal
nodes be $r$, and denote these internal nodes by
$i_1,i_2,\cdots,i_{r}$. Let the topological order of all nodes be
$$s \prec i_1\prec i_2 \prec  \cdots \prec i_{r} \prec t\ .$$

Draw $w$ parallel channels from $s$ to $i_1$, $w$ parallel channels
from $i_1$ to $i_2$, in succession, $w$ parallel channels from
$i_{r}$ to $t$. The total $(r+1)w$ channels are all channels of the
network $G_1$. For this type of networks, we call them plait
networks. For this constructed network $G_1$, we will show that the
failure probability $P_{e_t}$ at sink node $t$ is
$$P_{e_t}= 1-\left[\prod_{i=1}^{w}\left(1-\frac{1}{|\mathcal{F}|^i}\right)\right]^{r+1}\ \ . $$

It is not difficult to see that the event ``$\Rank(F_t)<w$'' is
equivalent to the event ``$\Rank(F_t^{(r+1)})<w$'' because of
$F_t=F_t^{(r+1)}$. This implies
$$P_{e_t}=Pr(\Rank(F_t^{(r+1)})<w)=1-Pr(\Gamma_{t,r+1}).$$
Furthermore, for $G_1$,
\begin{align*}
&Pr(\Gamma_{t,r+1})=Pr(\Gamma_{t,r+1}\Gamma_{t,r}\cdots\Gamma_{t,1}\Gamma_{t,0})\\
=&Pr(\Gamma_{t,r+1}|\Gamma_{t,r})Pr(\Gamma_{t,r}|\Gamma_{t,r-1})\cdots Pr(\Gamma_{t,1}|\Gamma_{t,0}).
\end{align*}
And, for any $k=0,1,\cdots, r$,
\begin{equation*}
\begin{split}
&Pr(\Gamma_{t,k+1}|\Gamma_{t,k})\\
=&Pr(f_{e_{k,1}}\notin \langle
\underline{0} \rangle,f_{e_{k,2}}\notin \langle f_{e_{k,1}}\rangle,f_{e_{k,3}}\notin \langle f_{e_{k,1}},f_{e_{k,2}}\rangle\cdots,\\
&f_{e_{k,w}}\notin \langle \{f_{e_{k,1}},\ \cdots,\ f_{e_{k,{w-1}}}\} \rangle)\\
=&\prod_{i=1}^w\left(1-\frac{1}{\mathcal{|F|}^{i}}\right),
\end{split}
\end{equation*}
where $In(i_{k+1})=Out(i_{k})=\{e_{k,1},e_{k,2},\cdots,e_{k,w}\}$
and $\underline{0}$ is a zero vector.

Combining the above, we get
$$Pr(\Gamma_{t,r+1})=\left[ \prod_{i=1}^w\left(1-\frac{1}{\mathcal{|F|}^{i}}\right) \right]^{r+1}\ ,$$
that is,
$$P_{e_t}=1-Pr(\Gamma_{t,r+1})
=1-\left[ \prod_{i=1}^w\left(1-\frac{1}{\mathcal{|F|}^{i}}\right)
\right]^{r+1}.$$ This means that the upper bound of the failure
probability at the sink node is tight, and the type of plaint
networks is the worst case.
\end{eg}

As mentioned above, sometimes, it is hard to use the predesigned
linear network coding based on the network topology even through the topology of the network is
known. But usually we still can get some information about the
network topology more or less. For instance, we
can know the number of the internal nodes $|J|$ at least. In these
cases, we also can analyze the performance of the random linear
network coding.

\begin{thm}\label{thm_nco_sink}
Let $G$ be a single source multicast network. Using the random linear
network coding, the failure probability at the sink node $t\in
T$ satisfies
$$P_{e_t}\leq 1-\left[\prod_{i=1}^{w}\left(1-\frac{1}{|\mathcal{F}|^i}\right)\right]^{|J|+1}\ .$$
\end{thm}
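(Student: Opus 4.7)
The plan is to reduce Theorem \ref{thm_nco_sink} to Corollary \ref{cor_co_sink} by a simple monotonicity argument on the exponent. Recall that $R_t$ was defined as the minimum number of internal nodes lying on any choice of $w$ channel-disjoint paths from $s$ to $t$, while $|J|$ counts all internal nodes of the whole network. My first step is to observe that, since every internal node appearing on a path from $s$ to $t$ must lie in $J$, the internal nodes of any $w$ channel-disjoint $s$--$t$ paths form a subset of $J$, whence $R_t \leq |J|$ for every sink $t \in T$.

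Next I would note that the base of the exponentiation,
$$q \triangleq \prod_{i=1}^{w}\left(1-\frac{1}{|\mathcal{F}|^i}\right),$$
lies in the open interval $(0,1)$ (each factor is positive and strictly less than $1$). Therefore the map $n \mapsto q^{\,n+1}$ is strictly decreasing in $n$, and consequently $n \mapsto 1 - q^{\,n+1}$ is strictly increasing in $n$. Applying this with $n = R_t$ and $n = |J|$ and using $R_t \leq |J|$ yields
$$1 - q^{R_t+1} \;\leq\; 1 - q^{|J|+1}.$$

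Combining this inequality with Corollary \ref{cor_co_sink}, which already gives $P_{e_t} \leq 1 - q^{R_t+1}$, delivers the desired bound
$$P_{e_t} \;\leq\; 1 - \left[\prod_{i=1}^{w}\left(1-\frac{1}{|\mathcal{F}|^i}\right)\right]^{|J|+1}.$$
There is no real obstacle here: the entire content of the theorem is the inequality $R_t \leq |J|$, which reflects the idea that, when the global topology is unknown and one only has access to the cardinality of $J$, one must pessimistically upper bound the path-internal node count by $|J|$. The usefulness of the theorem is conceptual rather than technical, as it converts the topology-dependent bound of Corollary \ref{cor_co_sink} into one that uses only the coarse parameter $|J|$.
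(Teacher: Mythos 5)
Your proof is correct and follows what is clearly the intended derivation: the paper omits an explicit proof of Theorem \ref{thm_nco_sink}, but the ordering of the results (Theorem \ref{thm_cohe_sink}, then Corollary \ref{cor_co_sink} with the parameter $R_t$, then this theorem with $|J|$ in its place) shows it is meant to follow exactly as you argue, from $R_t\le |J|$ together with the monotonicity of $n\mapsto 1-q^{\,n+1}$ for $q=\prod_{i=1}^{w}\bigl(1-|\mathcal{F}|^{-i}\bigr)\in(0,1)$. No gap here; your observation that the internal nodes of any $w$ channel-disjoint $s$--$t$ paths form a subset of $J$ is precisely the needed step.
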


\begin{rem}\label{rem_nco_sink}
This upper bound is still tight and we can also give an example to
indicate the tightness.
\end{rem}
\begin{eg}
For a given information rate $w$, construct a plait network $G_2$,
where the unique source node is $s$, the sink node is $t$, and all
internal nodes are $i_1,i_2,\cdots,i_{|J|}$. Let the topological
order of all nodes be
$$s\triangleq i_0 \prec i_1\prec i_2 \prec  \cdots \prec i_{|J|} \prec i_{|J|+1}\triangleq t\ .$$
There are $w$ parallel channels from $i_j$ to $i_{j+1}$, $0\leq j
\leq |J|$. Similar to the example above, we obtain that the failure
probability $P_{e_t}$ for plait network $G_2$ is
$$P_{e_t}= 1-\left[\prod_{i=1}^{w}\left(1-\frac{1}{|\mathcal{F}|^i}\right)\right]^{|J|+1}. $$
\end{eg}

\section{The Lower Bounds of The Failure Probabilities}
In the last section, we give some upper bounds of the failure
probability at sink node in order to analyze performance of random
linear network coding. In this section, we give the lower bound of
this failure probability.

\begin{thm}
For a single source multicast network $G$, using random linear
network coding, the failure probability at the sink node
satisfies $P_{e_t}\geq 1/|\mF|^{\delta_t+1}$, where
$\delta_t=C_t-w$.
\end{thm}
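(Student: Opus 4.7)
The plan is to condition on all local coefficients at nodes other than the source $s$, and exploit that the vectors $f_a$ for $a\in Out(s)$ form a $w\times n$ matrix $F_{Out(s)}$ with iid uniform entries in $\mF$ (where $n=|Out(s)|\ge C_t$, since $Out(s)$ is an $s$-$t$ cut). By expanding the recursion $f_e=\sum_d k_{d,e}f_d$ from $In(s)$ outward, every global vector $f_e$ in the network can be written as a linear combination $\sum_j C_{j,e}f_{a_j}$ whose coefficients $C_{j,e}$ depend only on the coefficients at downstream nodes. Hence, conditional on those downstream coefficients, $F_t=F_{Out(s)}\cdot C$ for a deterministic $n\times|In(t)|$ matrix $C$.

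The key structural fact is that $\rank(C)\le C_t$ for every realization of the downstream coefficients. Indeed, fixing any min-cut $M$ of size $C_t$, the recursion factors $C=B\cdot D$ with $B\in\mF^{n\times C_t}$ (mapping $Out(s)$ coordinates to cut coordinates) and $D\in\mF^{C_t\times|In(t)|}$ (mapping cut coordinates to $In(t)$ coordinates), so $\rank(C)\le C_t$. Let $k=\rank(C)$ and let $B_0\in\mF^{n\times k}$ be any matrix whose columns form a basis of the column space of $C$; then $\rank(F_t)=\rank(F_{Out(s)}B_0)$. Because each row of $F_{Out(s)}$ is uniform in $\mF^n$ and $B_0$ has full column rank $k$, the linear map $v\mapsto vB_0$ is surjective from $\mF^n$ onto $\mF^k$ with uniform fibers; so each row of $F_{Out(s)}B_0$ is uniform in $\mF^k$ and the rows are independent. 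Thus $F_{Out(s)}B_0$ is a $w\times k$ matrix with iid uniform entries in $\mF$.

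Using the standard identity $\Pr(\rank<w)=1-\prod_{i=0}^{w-1}(1-|\mF|^{i-k})$ for a $w\times k$ iid uniform matrix (when $k\ge w$; otherwise the probability is trivially $1$), the elementary inequality $1-\prod_i(1-a_i)\ge\max_i a_i$ gives $\Pr(\rank(F_t)<w\mid K_d)\ge|\mF|^{-(k-w+1)}\ge|\mF|^{-(\delta_t+1)}$, since $k\le C_t$ implies $k-w+1\le C_t-w+1=\delta_t+1$. Averaging over the downstream coefficients yields $P_{e_t}\ge1/|\mF|^{\delta_t+1}$. I expect the main obstacle to be cleanly establishing the factorization $C=BD$ through an arbitrary min-cut and rigorously verifying that multiplication of an iid-uniform matrix by a deterministic full column rank matrix preserves iid uniformity on the image; both are routine once set up but are the load-bearing steps of the argument.
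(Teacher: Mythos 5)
The paper states this theorem without any proof (the argument is among those omitted for space), so there is nothing in the source to compare your route against; I can only assess your proposal on its own terms, and it is correct. Your reduction is the natural one: conditioning on the non-source local coefficients makes $F_t=F_{Out(s)}C$ with $F_{Out(s)}$ a $w\times|Out(s)|$ matrix of i.i.d.\ uniform entries, and the two steps you flag as load-bearing do go through. For the factorization $C=BD$ through a minimum cut $M$, the key observation is that every channel $e\in In(t)$ is separated from $s$ by $M$ (any $s$--$e$ path extends to an $s$--$t$ path, hence meets $M$), so by induction on the topological order of channels downstream of $M$ one gets $f_e\in\langle f_m: m\in M\rangle$ with coefficients that are polynomials in \emph{non-source} local coefficients only --- the one point worth making explicit is that a channel of $Out(s)$ that is separated from $s$ by $M$ must itself lie in $M$, so no source coefficient leaks into $D$ and $F_{Out(s)}$ stays i.i.d.\ uniform under the conditioning. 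The uniformity-preservation step is also fine: since $B_0$ has full column rank $k$, the map $v\mapsto vB_0$ is a surjective linear map with equal-size fibers, so the rows of $F_{Out(s)}B_0$ are independent uniform elements of $\mF^k$, and the chain $\Pr(\Rank<w)\geq|\mF|^{w-1-k}\geq|\mF|^{-(\delta_t+1)}$ (using $k\leq C_t$, with the $k<w$ case trivial) followed by averaging gives exactly the claimed bound. Your argument also makes the asymptotic tightness transparent, since for a network in which $C$ can be made full rank $C_t$ the conditional failure probability is exactly $1-\prod_{i=0}^{w-1}(1-|\mF|^{i-C_t})\sim|\mF|^{-(\delta_t+1)}$.
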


\begin{rem}
The lower bound in this theorem is also asymptotically tight.
\end{rem}
\section{Conclusion}
The performance of random linear network coding is important for
theory and application. In the present paper, we further analyze the upper
bounds of failure probability at sink node. In particular, the
more information about the network topology is utilized, the
better upper bounds are obtained. We further discuss the
lower bound of this failure probability and indicate that it is also
asymptotically tight.

In addition, other probabilities, such as failure probability for network and average failure probability, can also be defined to characterize the performance of random linear network coding. We have also analyzed these probabilities. But due to limited pages, we omit them.


\end{document}